\titleformat*{\section}{\Large\bfseries}
\titleformat*{\subsection}{\large\sc}
\titleformat*{\subsubsection}{\itshape}
\begin{document}

\title{{\bf On the computational complexity of evolution}}

\author{{\large{ Ioannis Avramopoulos\footnote{The author is with the National Technical University of Athens. His email is \texttt{iavramop@central.ntua.gr.}}}}
}

\maketitle

\thispagestyle{empty} 

\newtheorem{definition}{Definition}
\newtheorem{proposition}{Proposition}
\newtheorem{theorem}{Theorem}
\newtheorem*{theorem*}{Theorem}
\newtheorem{corollary}{Corollary}
\newtheorem{lemma}{Lemma}
\newtheorem{axiom}{Axiom}
\newtheorem{thesis}{Thesis}

\vspace*{-0.2truecm}

\begin{abstract}
It is well-known that the problem of recognizing an ESS in a symmetric bimatrix game is {\bf coNP}-complete. In this paper, we show that recognizing an ESS even in doubly symmetric bimatrix games is also {\bf coNP}-complete. Our result further implies that recognizing asymptotically stable equilibria of the replicator dynamic in this class of games is also a {\bf coNP}-complete problem. 
\end{abstract}

\section{Introduction}
\label{introduction}

The theory of evolution, as that was formulated by Darwin, has had a profound impact in the sciences, not only in biology but also in the social sciences and economics. From a mathematical perspective, evolutionary theory relies on the framework of noncooperative games. From such an analytical perspective, evolutionary theory can be approached either from the perspective of notions of {\em evolutionary stability} \citep{TheLogicOfAnimalConflict, Evolution} or that of {\em evolutionary dynamics} (such as the {\em replicator dynamic} of \cite{TaylorJonker}). 

In this paper, we examine theories of evolution from a computational complexity perspective. \cite{Etessami} and \cite{Nisan-ESS} have cast doubt that the notion of evolutionary stability can serve as a general foundation of evolutionary theories unless {\bf P = NP}, by showing that the problem of recognizing an {\em evolutionarily stable strategy (ESS)} is {\bf coNP}-complete. This result applies to the general class of symmetric bimatrix games. 

Restricting the problem to the class of doubly symmetric (coordination) games (that is, symmetric bimatrix games where the payoff matrix is symmetric), one would perhaps expect that such a detection problem becomes computationally  tractable as coordination games were recently strongly tied to Darwin's theory of evolution: \cite{MWUA} view evolution as a coordination game solved by a multiplicative weights algorithm akin to the replicator dynamic (cf. \citep{AHK}) (see also \citep{MPP}).  Alas, we show that the problem remains {\bf coNP}-complete even in symmetric coordination games. Since a strategy in such a coordination game is an ESS if and only it is asymptotically stable under the replicator dynamic \citep{Hofbauer-Sigmund}, the problem of detecting asymptotically stable equilibria of the replicator dynamic is also {\bf coNP}-complete.

\section{Preliminaries}
\label{preliminaries}

The ESS is a refinement of the symmetric Nash equilibrium in symmetric bimatrix games. Let us, therefore, start off by introducing game-theoretic concepts in this setting.

\subsection{Nash equilibria}

A $2$-player (bimatrix) game in normal form is specified by a pair of $n \times m$ matrices $A$ and $B$, the former corresponding to the {\em row player} and the latter to the {\em column player}. If $B = A^T$, where $A^T$ is the transpose of $A$, the game is called {\em symmetric}. A {\em mixed strategy} for the row player is a probability vector $P \in \mathbb{R}^n$ and a mixed strategy for the column player is a probability vector $Q \in \mathbb{R}^m$. The {\em payoff} to the row player of $P$ against $Q$ is $P \cdot A Q$ and that to the column player is $P \cdot B Q$. Let us denote the space of probability vectors for the row player by $\mathbb{P}$ and the corresponding space for the column player by $\mathbb{Q}$. A Nash equilibrium of a $2$-player game $(A, B)$ is a pair of mixed strategies $P^*$ and $Q^*$ such that all unilateral deviations from these strategies are not profitable, that is, for all $P \in \mathbb{P}$ and $Q \in \mathbb{Q}$, we simultaneously have that
\begin{align*}
P^* \cdot AQ^* &\geq P \cdot AQ^*\\
P^* \cdot BQ^* &\geq P^* \cdot BQ.
\end{align*}
Observe that if the bimatrix game is symmetric, the second inequality is redundant. Let $(C, C^T)$ be a symmetric bimatrix game. If $C$ is a symmetric matrix (that is, if $C^T = C$), $(C, C^T)$ is called a {\em doubly symmetric game} or a {\em coordination game}. We are going to denote symmetric payoff matrices by $S$. We call a Nash equilibrium strategy, say $P^*$, {\em symmetric,} if $(P^*, P^*)$ is an equilibrium.

\subsection{Evolutionary stability}

We are now ready to give the formal definition of an ESS.  

\begin{definition}
Let $(C, C^T)$ be a symmetric bimatrix game. We say that $X^* \in \mathbb{X}$ is an ESS, if 
\begin{align*}
\exists O \subseteq \mathbb{X} \mbox{ } \forall X \in O / \{ X^* \} : X^* \cdot CX > X \cdot CX.
\end{align*}
Here $\mathbb{X}$ is the space of mixed strategies of $(C, C^T)$ (a simplex in $\mathbb{R}^n$ where $n$ is the number of pure strategies) and $O$ is a neighborhood of $X^*$. 
\end{definition}

We note that an ESS is necessarily an {\em isolated} symmetric Nash equilibrium strategy in the sense that no other symmetric Nash equilibrium strategy exists in a neighborhood of an ESS. We have the following characterization of an ESS in doubly symmetric (coordination) games: Let $f : \mathbb{X} \rightarrow \mathbb{R}$ where $\mathbb{X}$ is a subset of $\mathbb{R}^n$. Recall that $X^* \in \mathbb{X}$ is a strict local maximum of $f$ if there exists a neighborhood $O$ of $X^*$ such that, for all $Y \in O/\{X^*\}, f(X^*) > f(Y)$. Note further that a {\em standard quadratic program} consists of finding maximizers of a quadratic form over the standard simplex.

\begin{theorem}
\label{symmetric_ESS_characterization_1}
$X^*$ is an ESS of $(S, S)$ if and only if $X^*$ is a strict local maximum of the standard quadratic program
\begin{align*}
\mbox{ maximize } &\frac{1}{2} X \cdot S X\\
\mbox{subject to }  & X \in \mathbb{X}
\end{align*}
where $\mathbb{X}$ is the simplex of mixed strategies of $(S, S)$.
\end{theorem}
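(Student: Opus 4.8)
The plan is to prove both directions of the equivalence by unwinding the definitions of ESS and strict local maximum, using the fact that in a doubly symmetric game the payoff matrix $S$ is symmetric. The central observation I would exploit is the algebraic identity relating the quadratic form $f(X) = \frac{1}{2} X \cdot SX$ to the ESS defect $X^* \cdot SX - X \cdot SX$. First I would write $f(X^*) - f(X) = \frac{1}{2}(X^* \cdot SX^* - X \cdot SX)$ and then rewrite the right-hand side by a ``completing the square'' manipulation. Setting $D = X - X^*$, symmetry of $S$ gives
\begin{align*}
X \cdot SX = X^* \cdot SX^* + 2\, X^* \cdot SD + D \cdot SD,
\end{align*}
so that
\begin{align*}
f(X^*) - f(X) = -\,X^* \cdot SD - \tfrac{1}{2}\, D \cdot SD.
\end{align*}
Separately, the ESS quantity expands (again using $S^T = S$) as
\begin{align*}
X^* \cdot SX - X \cdot SX = X^* \cdot SD - D \cdot SD + \big(X^* \cdot SX^* - X \cdot SX^*\big),
\end{align*}
and here the last parenthesized term equals $-\,X^* \cdot SD$, so the ESS quantity simplifies to the same expression $-\,X^* \cdot SD - \tfrac12 D\cdot SD$ that appears above. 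Thus I expect to establish the pointwise identity
\begin{align*}
X^* \cdot SX - X \cdot SX = f(X^*) - f(X)
\end{align*}
for every $X \in \mathbb{X}$, which is the engine of the whole proof.

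Given this identity, both directions become nearly immediate. For the forward direction, if $X^*$ is an ESS, then by definition there is a neighborhood $O$ on which $X^* \cdot SX > X \cdot SX$ for all $X \neq X^*$; by the identity this is exactly $f(X^*) > f(X)$ on $O/\{X^*\}$, which is the definition of a strict local maximum. The converse runs in the same way, reading the identity from right to left: a strict local maximum of $f$ on some neighborhood $O$ yields $X^* \cdot SX > X \cdot SX$ on $O/\{X^*\}$, witnessing the existential quantifier in the ESS definition with the very same $O$. Because the identity is an equality valid at every point of the simplex, no separate argument about the size or shape of the neighborhood is needed; the same $O$ serves both notions.

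The main obstacle, such as it is, lies entirely in the algebra of the first paragraph: one must be careful to invoke the symmetry $S^T = S$ correctly when converting cross terms like $X^* \cdot SD$ into $D \cdot SX^*$, since without symmetry the two ESS half-terms would not collapse and the clean identity would fail. This is precisely why the result is stated for doubly symmetric games $(S,S)$ rather than for general symmetric bimatrix games $(C, C^T)$; the quadratic form $\frac12 X \cdot CX$ is the ``potential'' of the game only when $C$ is symmetric. I would therefore flag explicitly where symmetry is used, and note that this is exactly the potential-game structure underlying the connection to the replicator dynamic mentioned in the introduction.
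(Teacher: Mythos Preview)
Your central claim---the pointwise identity $X^*\cdot SX - X\cdot SX = f(X^*)-f(X)$---is false, and the algebra leading to it contains an error. With $D=X-X^*$ and $S=S^T$ one has
\[
X^*\cdot SX - X\cdot SX \;=\; -\,X^*\cdot SD - D\cdot SD,
\qquad
f(X^*)-f(X)\;=\;-\,X^*\cdot SD-\tfrac12\,D\cdot SD,
\]
so the two expressions differ by $\tfrac12\,D\cdot SD$, which is not identically zero. (Concretely, with $S=I$, $X^*=e_1$, $X=e_2$ one gets $-1$ versus $0$.) In your expansion of the ESS quantity you introduced a spurious $X^*\cdot SD$ term; the correct expansion is $(X^*\cdot SX^* - X\cdot SX^*)-D\cdot SD$, which simplifies to $-X^*\cdot SD - D\cdot SD$, not to $-X^*\cdot SD-\tfrac12 D\cdot SD$. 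Since the whole argument is built on this identity, both directions collapse as written.

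What is true is the relation
\[
2\bigl(f(X^*)-f(X)\bigr)\;=\;\bigl(X^*\cdot SX - X\cdot SX\bigr)\;+\;\bigl(X^*\cdot SX^* - X\cdot SX^*\bigr),
\]
i.e.\ the potential gap is the sum of the ESS defect and the Nash defect. From this the forward implication still goes through (an ESS is a Nash equilibrium, so both terms on the right are nonnegative and the first is strictly positive near $X^*$), but the converse does \emph{not} follow by pure algebra: you must first argue that a strict local maximizer of $f$ on the simplex is a KKT point, hence a symmetric Nash equilibrium, and then use a first/second--order (best--response) analysis to turn $f(X^*)>f(X)$ into $X^*\cdot SX>X\cdot SX$ on a neighborhood. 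The paper itself does not supply a proof of this theorem at all; it simply cites it as a known characterization (strict local efficiency, \`a la Weibull), so there is no ``paper's proof'' to compare to---but your proposed route, as it stands, has a genuine gap.
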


Theorem \ref{symmetric_ESS_characterization_1} is implicit in the literature as it amounts to a characterization of evolutionary stability based on the well-known notion of {\em strict local efficiency} (cf. \citep[p. 56]{Weibull}).

\section{Results}
\label{results}

We have the following characterization of an ESS. 

\begin{lemma}
\label{ESS_characterization}
$X^*$ is an ESS of $(C, C^T)$ if and only if it is a strict local maximum of
\begin{align}
\frac{1}{2}Y \cdot (C + C^T) Y - X^* \cdot CY, Y \in \mathbb{X}\label{qpqp}
\end{align}
where $\mathbb{X}$ is the simplex in $\mathbb{R}^n$ and $n$ is the size of $C$.
\end{lemma}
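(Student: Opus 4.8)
The plan is to prove the equivalence by a direct algebraic reduction of the quadratic objective in \eqref{qpqp} to the bilinear form appearing in the definition of an ESS, after which both directions follow simultaneously from a single pointwise identity. The first step I would carry out is to record the elementary fact that a matrix and its transpose induce the same quadratic form: for every $Y$ we have $Y \cdot C^T Y = Y \cdot C Y$, since $\sum_{i,j} Y_i (C^T)_{ij} Y_j = \sum_{i,j} Y_j C_{ji} Y_i = \sum_{i,j} Y_i C_{ij} Y_j$. Consequently $\frac{1}{2} Y \cdot (C + C^T) Y = Y \cdot C Y$, and the objective in \eqref{qpqp} simplifies to $g(Y) := Y \cdot C Y - X^* \cdot C Y$.

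Next I would evaluate $g$ at the candidate point. Substituting $Y = X^*$ gives $g(X^*) = X^* \cdot C X^* - X^* \cdot C X^* = 0$. Hence, by the definition of a strict local maximum recalled before the statement, $X^*$ is a strict local maximum of $g$ over $\mathbb{X}$ precisely when there is a neighborhood $O$ of $X^*$ with $g(Y) < 0$ for every $Y \in O / \{X^*\}$. Rearranging $g(Y) < 0$ yields the equivalent inequality $X^* \cdot C Y > Y \cdot C Y$.

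The final step is to match this with the definition of an ESS. The condition $X^* \cdot C Y > Y \cdot C Y$ holding for all $Y \in O / \{X^*\}$ is verbatim the defining inequality of an ESS (with the dummy variable $X$ renamed to $Y$). Since the reduction $g(Y) < 0 \iff X^* \cdot C Y > Y \cdot C Y$ is a pointwise equivalence, the very same neighborhood $O$ witnesses both properties; neither direction requires constructing a new neighborhood, and both implications become immediate.

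I do not expect a serious obstacle here: the entire content of the lemma lies in the identity $\frac{1}{2} Y \cdot (C + C^T) Y = Y \cdot C Y$, which collapses the apparently quadratic-programming formulation onto the ESS inequality. The only point requiring care is bookkeeping — ensuring that the term $-X^* \cdot C Y$ is treated as a function of $Y$ (it is linear in $Y$, not a genuine constant) and that the normalization $g(X^*) = 0$ is invoked, so that ``strict local maximum'' translates into the strict inequality of the ESS definition rather than into a comparison between two unrelated nonzero values.
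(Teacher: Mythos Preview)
Your argument is correct and follows essentially the same route as the paper: the paper invokes the decomposition $C = \tfrac{1}{2}(C+C^T) + \tfrac{1}{2}(C-C^T)$ together with the vanishing of the skew-symmetric quadratic form, which is exactly your identity $Y\cdot C^T Y = Y\cdot C Y$ (equivalently $\tfrac{1}{2}Y\cdot(C+C^T)Y = Y\cdot CY$) stated in slightly different language. Your write-up is in fact more explicit than the paper's terse ``straightforward from the definition,'' but the content and the key step are the same.
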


\begin{proof}
Straightforward from the definition of an ESS, and the elementary observations that 
\begin{align*}
C = \frac{1}{2} (C + C^T) + \frac{1}{2} (C - C^T)
\end{align*}
and that since $C - C^T$ is skew-symmetric, $Y \cdot (C - C^T) Y = 0$ for all $Y$.
\end{proof}

As a consequence of Lemma \ref{ESS_characterization} we obtain the following immediate fact.

\begin{lemma}
\label{local_maximum_hardness}
The problem of detecting\footnote{In this paper, we use the terms {\em detecting} and {\em recognizing} as synonyms.} a strict local maximum of a standard quadratic program is {\bf coNP}-hard.
\end{lemma}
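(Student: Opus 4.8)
The plan is to reduce from the known $\mathbf{coNP}$-completeness of recognizing an ESS in general symmetric bimatrix games, which the excerpt attributes to Etessami and Nisan. By Lemma \ref{ESS_characterization}, $X^*$ is an ESS of an arbitrary symmetric bimatrix game $(C, C^T)$ if and only if $X^*$ is a strict local maximum of the quadratic objective $\frac{1}{2} Y \cdot (C + C^T) Y - X^* \cdot C Y$ over the simplex $\mathbb{X}$. I would use this equivalence to transport an instance of the ESS-recognition problem directly into an instance of the strict-local-maximum problem for a standard quadratic program.

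First I would fix an instance of the ESS problem: a symmetric bimatrix game $(C, C^T)$ together with a candidate strategy $X^*$, where deciding whether $X^*$ is an ESS is $\mathbf{coNP}$-complete (equivalently, deciding that $X^*$ is \emph{not} an ESS is $\mathbf{NP}$-complete). Next, I would read off from Lemma \ref{ESS_characterization} the objective $g(Y) = \frac{1}{2} Y \cdot (C + C^T) Y - X^* \cdot C Y$. The point is that $g$ is a quadratic form plus a linear term, and on the simplex $\mathbb{X}$ the linear term $-X^* \cdot C Y = -(C^T X^*) \cdot Y$ can be absorbed: since $\sum_i Y_i = 1$ on the simplex, adding a multiple of $\mathbf{1}$ to the coefficient vector of any linear functional leaves $g$ unchanged on $\mathbb{X}$, and more usefully one can homogenize $-b \cdot Y$ (with $b = C^T X^*$) into the quadratic part by writing $-b \cdot Y = -(b \cdot Y)(\mathbf{1} \cdot Y)$ on $\mathbb{X}$ and symmetrizing. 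This produces a symmetric matrix $S$ (depending only on $C$ and $X^*$, computable in polynomial time) such that $g(Y) = \frac{1}{2} Y \cdot S Y$ for all $Y \in \mathbb{X}$, so that $X^*$ is a strict local maximum of $g$ over $\mathbb{X}$ if and only if it is a strict local maximum of the standard quadratic program with matrix $S$.

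Combining the two equivalences, $X^*$ is an ESS of $(C, C^T)$ if and only if $X^*$ is a strict local maximum of the standard quadratic program $\max_{Y \in \mathbb{X}} \frac{1}{2} Y \cdot S Y$. This is exactly a polynomial-time many-one reduction from ESS-recognition to strict-local-maximum recognition, and since the former is $\mathbf{coNP}$-complete, the latter is $\mathbf{coNP}$-hard, which is the claim.

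The main obstacle I anticipate is the homogenization step: I must verify that absorbing the linear term $-X^* \cdot C Y$ into a genuinely symmetric quadratic form preserves the \emph{strict local} maximum property on the simplex, not merely the value or the location of maximizers. Concretely, replacing $-b \cdot Y$ by $-\frac{1}{2}(b \mathbf{1}^T + \mathbf{1} b^T)$ acting as a quadratic form agrees with the linear functional on $\mathbb{X}$ because $\mathbf{1} \cdot Y = 1$ there, but I need that this agreement is an identity of functions on the affine hull of the simplex (or at least on a neighborhood within $\mathbb{X}$), so that local maxima and their strictness coincide. I would check this carefully by restricting both objectives to the tangent directions of the simplex at $X^*$ and confirming they induce the same second-order and first-order behavior. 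Once this bookkeeping is done, the reduction is immediate and everything else is routine.
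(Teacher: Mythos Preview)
Your proposal is correct and follows essentially the same route as the paper: reduce from the \textbf{coNP}-completeness of ESS recognition via Lemma~\ref{ESS_characterization}, then absorb the linear term into the quadratic part to obtain a standard quadratic program. The paper phrases the homogenization step tersely as a ``rank two update'' (citing Bomze), which is precisely your $-\tfrac{1}{2}(b\mathbf{1}^T + \mathbf{1}b^T)$ with $b = C^T X^*$; your worry about preserving strict local maxima is moot, since the two objectives coincide \emph{pointwise} on the simplex (not merely up to second order), so their strict local maxima there are identical.
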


\begin{proof}
\cite{Etessami} and \cite{Nisan-ESS} show that the problem of detecting an ESS is {\bf coNP}-complete. Since \eqref{qpqp} can be readily brought into the form of a standard quadratic program by a rank two update (cf. \citep{Bomze}), the problem of detecting an ESS reduces (in the sense of a {\em Karp reduction}) to the problem of detecting a strict local maximum of a standard quadratic program in light of Lemma \ref{ESS_characterization}.
\end{proof}

\begin{theorem}
\label{fundamental_complexity}
The problem of detecting an ESS in a doubly symmetric bimatrix game is {\bf coNP}-complete.
\end{theorem}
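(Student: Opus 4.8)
The plan is to prove this by establishing both containment in {\bf coNP} and {\bf coNP}-hardness. Membership in {\bf coNP} is the easier direction: to certify that a candidate $X^*$ is \emph{not} an ESS of $(S, S)$, it suffices by Theorem~\ref{symmetric_ESS_characterization_1} to exhibit a witness that $X^*$ fails to be a strict local maximum of the quadratic program, i.e. a nearby point $Y$ with $\frac{1}{2} Y \cdot S Y \geq \frac{1}{2} X^* \cdot S X^*$; the subtlety here is that ``nearby'' must be made precise, but this can be handled by the standard fact that for polynomial-sized games a violation, if it exists, can be witnessed at polynomial bit-length, so the verification runs in polynomial time.

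The main work, and the main obstacle, is the {\bf coNP}-hardness direction, since the known hardness results of \cite{Etessami} and \cite{Nisan-ESS} produce general symmetric bimatrix games $(C, C^T)$ that need not be doubly symmetric. The plan is to reduce the general ESS-recognition problem to the doubly symmetric case. Lemma~\ref{ESS_characterization} already tells us that an ESS of a general $(C, C^T)$ is precisely a strict local maximum of the perturbed quadratic form $\frac{1}{2} Y \cdot (C + C^T) Y - X^* \cdot C Y$ over the simplex. The first step is therefore to pass from $C$ to its symmetrization $S := C + C^T$, which is genuinely symmetric, so that $(S, S)$ is a bona fide doubly symmetric game whose associated quadratic program (via Theorem~\ref{symmetric_ESS_characterization_1}) is $\frac{1}{2} Y \cdot S Y$.

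First I would make explicit the gadget that absorbs the linear term $-X^* \cdot C Y$. The difficulty is that Lemma~\ref{ESS_characterization}'s quadratic program carries this extra linear term indexed by $X^*$, whereas the doubly symmetric ESS condition has a pure quadratic form and no dependence on the candidate point. The natural device is the rank-two update alluded to in the proof of Lemma~\ref{local_maximum_hardness} and attributed to \cite{Bomze}: on the simplex one has $X \cdot \mathbf{1} = 1$, so a linear term $c \cdot X$ can be rewritten as $X \cdot (c \mathbf{1}^T) X$ or absorbed symmetrically as $\frac{1}{2} X \cdot (c \mathbf{1}^T + \mathbf{1} c^T) X$, turning a linear-plus-quadratic objective into a pure quadratic form with a symmetric matrix, while preserving the location of strict local maxima on the simplex. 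The hard part will be verifying that this homogenization exactly preserves \emph{strict} local maxima (not merely maxima) and that the reduction is computable in polynomial time with polynomially bounded entries, so that it is a genuine Karp reduction.

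Assembling these pieces, the reduction runs as follows: given an instance $(C, C^T)$ and candidate $X^*$ of the general ESS problem, form the symmetric matrix that results from symmetrizing $C$ and absorbing the linear term $-X^* \cdot C Y$ via the rank-two update, call it $S'$; then $X^*$ is an ESS of $(C, C^T)$ if and only if $X^*$ is a strict local maximum of $\frac{1}{2} Y \cdot S' Y$ on the simplex, which by Theorem~\ref{symmetric_ESS_characterization_1} holds if and only if $X^*$ is an ESS of the doubly symmetric game $(S', S')$. Since the general ESS problem is {\bf coNP}-complete by \cite{Etessami} and \cite{Nisan-ESS}, this polynomial-time many-one reduction establishes {\bf coNP}-hardness; combined with the membership argument above, the problem is {\bf coNP}-complete.
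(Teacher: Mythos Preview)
Your hardness argument is exactly the paper's: the paper packages the chain ``general ESS $\Rightarrow$ strict local maximum of a standard quadratic program'' as Lemma~\ref{local_maximum_hardness} (which itself invokes Lemma~\ref{ESS_characterization} plus the Bomze rank-two update to absorb the linear term), and then closes the loop with Theorem~\ref{symmetric_ESS_characterization_1}, just as you do inline.

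Where you diverge is the {\bf coNP} membership direction. You attempt a direct certificate argument---exhibit a nearby $Y$ with $\frac{1}{2} Y \cdot S Y \geq \frac{1}{2} X^* \cdot S X^*$---and concede that making ``nearby'' precise and bounding the bit-length of the witness is the subtle part. The paper sidesteps all of this with a one-line observation: doubly symmetric games form a subclass of symmetric bimatrix games, and ESS recognition in the latter is already known to lie in {\bf coNP} by \cite{Etessami} and \cite{Nisan-ESS}, so membership is inherited for free. Your route is not wrong in principle, but it re-derives (and leaves somewhat hand-wavy) a fact that follows immediately by restriction; the paper's argument is both shorter and fully rigorous without further work.
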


\begin{proof}
In light of Lemma \ref{local_maximum_hardness} and Theorem \ref{symmetric_ESS_characterization_1}, which together show that the problem of detecting an ESS in a doubly symmetric game is {\bf coNP}-hard, it suffices to show that the problem of detecting an ESS in a doubly symmetric bimatrix game is in {\bf coNP}. But this is a simple implication of the fact that detecting an ESS in a symmetric bimatrix is {\bf coNP}-complete \citep{Etessami, Nisan-ESS} and that the class of doubly symmetric bimatrix games is not but a subset of the class of symmetric bimatrix games. 
\end{proof}

Theorem \ref{fundamental_complexity} implies the following corollary.

\begin{corollary}
\label{rd_corollary}
The problem of detecting an asymptotically stable equilibrium point of the continuous-time replicator dynamic in a doubly symmetric game is {\bf coNP}-complete.
\end{corollary}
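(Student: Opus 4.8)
The plan is to reduce Corollary \ref{rd_corollary} entirely to Theorem \ref{fundamental_complexity} by arguing that, within the class of doubly symmetric games, the decision problem ``is $X^*$ asymptotically stable under the replicator dynamic?'' and the decision problem ``is $X^*$ an ESS?'' are literally the same problem: identical inputs (a symmetric matrix $S$ together with a candidate strategy $X^*$) and identical yes-instances. Since \textbf{coNP}-completeness is a property of a decision problem, once the two problems are shown to coincide, both hardness and membership transfer verbatim from Theorem \ref{fundamental_complexity}.

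First I would recall the continuous-time replicator dynamic on the simplex $\mathbb{X}$ for a doubly symmetric game $(S,S)$, namely $\dot{X}_i = X_i((SX)_i - X \cdot SX)$, and invoke the classical fact that the mean payoff $\frac{1}{2}X \cdot SX$ is a strict Lyapunov function along its trajectories (the gradient structure of symmetric games, essentially Fisher's fundamental theorem). This is precisely the Hofbauer--Sigmund equivalence cited in the introduction: a rest point $X^*$ is asymptotically stable if and only if it is a strict local maximum of $\frac{1}{2}X \cdot SX$ over $\mathbb{X}$.

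Next I would chain this with Theorem \ref{symmetric_ESS_characterization_1}, which identifies the strict local maxima of exactly the same quadratic form with the ESSs of $(S,S)$. Composing the two equivalences gives $X^*$ asymptotically stable $\Leftrightarrow$ $X^*$ a strict local maximum of $\frac{1}{2}X \cdot SX$ $\Leftrightarrow$ $X^*$ an ESS, so the two recognition problems have precisely the same set of yes-instances over the class of doubly symmetric games.

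Completeness then follows on both sides. For hardness, Theorem \ref{fundamental_complexity} gives \textbf{coNP}-hardness of ESS recognition in doubly symmetric games, and the identity of yes-instances means the identity map is a (trivial) Karp reduction from ESS recognition to asymptotic-stability recognition, so the latter is \textbf{coNP}-hard; for membership, the same identity lets me reuse the \textbf{coNP} certificate for the complement of the ESS problem, placing asymptotic-stability recognition in \textbf{coNP}. I expect no genuine obstacle here; the only point needing care is to state the Lyapunov/strict-local-maximum equivalence precisely enough that it holds for \emph{all} rest points (not merely interior ones) and across the entire class, so that the reduction is an honest identity of problems rather than an equivalence on a proper subset of instances.
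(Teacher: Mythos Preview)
Your proposal is correct and follows essentially the same approach as the paper: both arguments observe that in doubly symmetric games ESS and asymptotic stability under the replicator dynamic coincide (you route this through the strict-local-maximum characterization and the Lyapunov property of mean payoff, while the paper simply cites the Hofbauer--Sigmund equivalence directly), so that Theorem~\ref{fundamental_complexity} transfers verbatim. Your version is more explicit about why the equivalence holds, but the underlying idea is identical.
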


\begin{proof}
In doubly symmetric bimatrix games, the notions of asymptotic stability under the replicator dynamic and that of evolutionary stability coincide \citep{Hofbauer-Sigmund}.
\end{proof}

\section*{Acknowledgments}

I would like to thank Bill Sandholm for comments on a previous draft. I would also like to thank Miltos Anagnostou for hosting me at NTUA and for various helpful discussions.

\bibliographystyle{abbrvnat}
\bibliography{real}

\end{document}